\newtheorem{lemma}{Lemma}
\newtheorem{corollary}{Corollary}
\DeclareMathAccent{\widehat}{\mathord}{largesymbols}{"62}
\DeclareMathOperator{\ord}{ord}
\newcommand{\pbra}[1]{\left( #1 \right)}
\newcommand{\cbra}[1]{\left\{ #1 \right\}}
\newcommand{\Zbb}{\mathbb{Z}}
\title{A Note on Lower Digits Extraction Polynomial for Bootstrapping}
\author[1]{Mingjia Huo\thanks{mingjia@pku.edu.cn}}
\author[1]{Kewen Wu\thanks{shlw\_kevin@pku.edu.cn}}
\author[2]{Qi Ye\thanks{yeq18@mails.tsinghua.edu.cn}}
\affil[1]{School of Electronics Engineering and Computer Science, Peking University, Beijing, China}
\affil[2]{Institute for Interdisciplinary Information Sciences, Tsinghua University, Beijing, China}
\date{}
\begin{document}

\maketitle

\begin{abstract}
    Bootstrapping is a crucial but computationally expensive step for realizing Fully Homomorphic Encryption (FHE). Recently, Chen and Han (Eurocrypt 2018)
    introduced a family of low-degree polynomials to extract the lowest digit with respect to a certain congruence, which helps improve the bootstrapping for both FV and BGV schemes.
    
    In this note, we present the following relevant findings about the work of Chen and Han (referred to as CH18): 
    \begin{itemize}
        \item We provide a simpler construction of the low-degree polynomials that serve the same purpose and match the asymptotic bound achieved in CH18;
        \item We show the optimality and limit of our approach by solving a minimal polynomial degree problem;
        \item We consider the problem of extracting other low-order digits using polynomials, 
            and provide negative results.
    \end{itemize}
\end{abstract}

\section{Introduction}
Fully homomorphic encryption (FHE) \cite{G09} is a special form of encryption that allows arbitrary computation on ciphertexts, producing a ciphertext which decrypts to the result of the desired operations (as if they had been performed) on the plaintexts. 
 In a typical homomorphic encryption, each fresh ciphertext starts with a small initial``noise'' and grows with homomorphic operations until it eventually reaches a threshold and causes decryption failures. To solve this problem, {\sl Bootstrapping} is proposed as a ``refreshing'' procedure by homomorphically evaluating its own decryption algorithm (on highly noisy ciphertexts). Note that bootstrapping does not  eliminate the noise completely but only mitigates it. It is therefore important to squash the decryption circuit as shallow as possible to reduce the noise brought in
 by bootstrapping itself.

A prominent work by Halevi and Shoup~\cite{HS15} optimized and implemented the bootstrapping over the BGV scheme~\cite{BGV}.
Subsequent works show that the approach can also be applied to the FV scheme~\cite{FV}.
The bootstrapping procedure mainly consists of five steps: modulus switching, dot product, linear transform, digit extraction and ``inverse'' linear transform, of which {\sl digit extraction} is the most
 time-consuming. Chen and Han  \cite{CH18} proposed an improved digit extraction method that significantly brings down the depth and number of multiplications. However,  the work of
\cite{CH18} employs a rather complicated construction of polynomials. Therefore, it is naturally to ask the following question: ``is there any simpler, more intuitive, yet still efficient, way to realize digit extraction? ''

Towards this purpose, this note presents a simpler solution, which bears the same asymptotic bounds in both depth and number of multiplications with \cite{CH18}, and is better than the construction in the original implementation~\cite{HS15}. In addition, we discuss the optimality and limit of our approach. Furthermore, we provide a negative result on the existence of polynomials to directly extract more than one digits, which rules out the possibility to further improve the digit extraction proposed in \cite{mathoverflow}.

\section{Simpler Lowest Digit Extraction Polynomial}

For simplicity, we use $[n]$ to denote  $\cbra{0,1,\ldots,n-1}$ and $\deg(\cdot)$ refer the degree of a polynomial (i.e., the highest degree of its terms with non-zero coefficients). 
Also, we define $\ord_p(n)$ as the largest integer $e\geq0$ satisfying $p^e|n!$; and $\ord^{-1}_p(e)$ as the smallest integer
$n\geq0$ satisfying $p^e|n!$.

We say that a polynomial is of  degree $d$ if it has maximum degree at most $d$.
Unless otherwise specified, all polynomials we discuss in the following are of integral coefficients.

In \cite{HS15}, the authors constructed a special polynomial $F_e(\cdot)$ with the following lifting property.
We adopt the description of it from \cite{CH18}.

\begin{lemma}[Corollary 5.5 from \cite{HS15}]
    For every prime $p$ and $e\geq1$, there exists a degree-$p$ polynomial $F_e$ such that for every integer $z_0,z_1$
    with $z_0\in[p]$ and every $1\leq e'\leq e$, we have 
    $$
    F_e\pbra{z_0+p^{e'}z_1}\equiv z_0\pmod{p^{e'+1}}.
    $$
\end{lemma}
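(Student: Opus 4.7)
My first step is to split the statement into two independent pieces. I claim it suffices to exhibit a degree-$p$ polynomial $F_e\in\Zbb[x]$ satisfying (A) $F_e(z_0)\equiv z_0\pmod{p^{e+1}}$ for every $z_0\in[p]$, and (B) $F_e(z_0+p^{e'}z_1)\equiv F_e(z_0)\pmod{p^{e'+1}}$ for every $z_0,z_1\in\Zbb$ and every $1\le e'\le e$. Indeed (A) is the $z_1=0,\;e'=e$ instance of the lemma, while (A) together with (B) yields every case since $e'+1\le e+1$.

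For (B), I would derive it from the purely syntactic condition $p\mid F_e'$ coefficient-wise. The integer Taylor expansion
\[
F_e(x+t)-F_e(x)\;=\;\sum_{j=1}^{p}\pbra{\sum_{k\ge j}a_k\binom{k}{j}x^{k-j}}t^j
\]
is an identity in $\Zbb[x,t]$ since all $\binom{k}{j}$ are integers. Substituting $t=p^{e'}z_1$, the $j=1$ summand equals $p^{e'}z_1\,F_e'(x)$, which is divisible by $p^{e'+1}$ under the hypothesis; each $j\ge2$ summand already carries $p^{je'}$, and $je'\ge e'+1$ for $e'\ge1$. Thus every term is divisible by $p^{e'+1}$.

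For (A), I would build $F_e$ by an inductive correction. Take $F_0(x):=x^p$; Fermat gives $F_0(z_0)\equiv z_0\pmod p$ for $z_0\in[p]$. Inductively, assuming $F_k(z_0)\equiv z_0\pmod{p^{k+1}}$, the map $z_0\mapsto-(F_k(z_0)-z_0)/p^{k+1}$ is a well-defined function $[p]\to\Zbb$; by Lagrange interpolation over $\Zbb/p\Zbb$ it is matched modulo $p$ by some $G_{k+1}\in\Zbb[x]$ of degree at most $p-1$. Setting $F_{k+1}:=F_k+p^{k+1}G_{k+1}$ gives $F_{k+1}(z_0)\equiv z_0\pmod{p^{k+2}}$, and after $e$ iterations we obtain $F_e$. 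Since every correction has degree strictly less than $p$, the leading $x^p$ survives and $\deg F_e=p$; moreover $F_e'(x)=px^{p-1}+\sum_{i=1}^{e}p^iG_i'(x)\in p\Zbb[x]$, so the hypothesis needed for (B) holds automatically and the proof concludes by combining (A) and (B).

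The main obstacle, modest as it is, is the bookkeeping in the Taylor argument of (B): one must treat $F_e^{(j)}/j!$ as an honest integer-coefficient polynomial via the binomial identity rather than as a derivative over the rationals, and one must verify that $je'\ge e'+1$ holds in the intended regime $j\ge 2,\;e'\ge 1$ (the statement would fail at $e'=0$). Beyond that, the only ingredients are Fermat's little theorem and polynomial interpolation modulo~$p$.
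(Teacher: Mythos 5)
Your proof is correct. The paper does not actually prove this lemma---it imports it verbatim as Corollary~5.5 of \cite{HS15}---and your argument (part (A): correcting $x^p$ by terms $p^i G_i$ with $\deg G_i\le p-1$ obtained by interpolation modulo $p$; part (B): the integer binomial/Taylor expansion combined with $F_e'\in p\Zbb[x]$, which your construction satisfies automatically) is essentially the original Halevi--Shoup construction, so there is nothing to add.
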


Composing the polynomial with itself several times yields a polynomial $G_e(\cdot)$ that extracts the lowest digit.

\begin{corollary}\label{HS15LDE}
    For every prime $p$ and $e\geq1$, there exists a degree-$p^{e-1}$ polynomial $G_e$ such that for every integer $z_0,z_1$
    with $z_0\in[p]$, we have 
    $$
    G_e\pbra{z_0+pz_1}\equiv z_0\pmod{p^e}.
    $$
\end{corollary}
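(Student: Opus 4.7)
The plan is to define $G_e$ as the $(e-1)$-fold self-composition of the lifting polynomial $F_e$ from the preceding lemma, and to verify by induction that each composition improves the modulus of agreement by one factor of $p$.

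More concretely, set $G_e := F_e^{(e-1)} = \underbrace{F_e \circ F_e \circ \cdots \circ F_e}_{e-1 \text{ times}}$, with the convention that $G_1$ is the identity. I would prove by induction on $k \in \{0,1,\ldots,e-1\}$ the statement
\[
F_e^{(k)}(z_0 + p z_1) \equiv z_0 \pmod{p^{k+1}}
\]
for every $z_0 \in [p]$ and every integer $z_1$. The base case $k = 0$ is immediate. For the inductive step, the inductive hypothesis lets us write $F_e^{(k)}(z_0 + p z_1) = z_0 + p^{k+1} m$ for some integer $m$; applying the lemma with the choice $e' = k+1 \leq e$ then gives
\[
F_e^{(k+1)}(z_0 + p z_1) = F_e\bigl(z_0 + p^{k+1} m\bigr) \equiv z_0 \pmod{p^{k+2}},
\]
which closes the induction. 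Taking $k = e-1$ yields exactly the required congruence.

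The degree bound is routine: since $F_e$ has degree $p$, composing it $e-1$ times gives a polynomial of degree $p^{e-1}$, matching the claim.

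There is no real obstacle here: the argument is a direct telescoping application of the lifting property of $F_e$, and the only care needed is to ensure that the exponent $e' = k+1$ used in each inductive step stays within the range $[1,e]$ allowed by the lemma, which holds precisely because we stop after $e-1$ compositions.
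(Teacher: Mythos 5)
Your proposal is correct and follows exactly the paper's argument: define $G_e$ as the $(e-1)$-fold composition of $F_e$ and induct on the number of compositions via $F_e^{(k)}(z_0+pz_1)\equiv z_0\pmod{p^{k+1}}$, which is precisely the induction the paper invokes. You simply spell out the inductive step (and the range check on $e'=k+1$) that the paper leaves to the reader.
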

\begin{proof}
    Let $G_e=\underbrace{F_e\circ\cdots\circ F_e}_{e-1}$, then it can be verified using induction on
    $$
    \underbrace{F_e\circ\cdots\circ F_e}_i(z_0+pz_1)\equiv z_0\pmod{p^{i+1}}.
    $$
\end{proof}

In \cite{CH18}, they managed to construct a polynomial with same purpose but much lower degree.

\begin{lemma}[Lemma 3 in \cite{CH18}]\label{CH18LDE}
    For every prime $p$ and $e\geq1$, there exists a degree-$((e-1)(p-1)+1)$ polynomial $H_e$ such that for every integer
    $z_0,z_1$ with $z_0\in[p]$, we have
    $$
    H_e(z_0+pz_1)\equiv z_0\pmod{p^e}.
    $$
\end{lemma}

However, their construction is complicated; and during their evaluation, the term $(e-1)(p-1)+1$ is somewhat too heavy
and they simply enlarge it to $ep$ for convenience.
Hence, we present a much simpler and more intuitive construction directly from \cite{HS15} without hurting the asymptotic 
performance of the algorithm in \cite{CH18}.

\begin{lemma}\label{newLDE}
    For every prime $p$ and $e\geq1$, there exists a degree-$(ep-1)$ polynomial $L_e$ such that for every integer
    $z_0,z_1$ with $z_0\in[p]$, we have
    $$
    L_e(z_0+pz_1)\equiv z_0\pmod{p^e}.
    $$
\end{lemma}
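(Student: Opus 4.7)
The plan is to start from Halevi--Shoup's polynomial $G_e$ of Corollary \ref{HS15LDE}: it already satisfies $G_e(z_0+pz_1)\equiv z_0\pmod{p^e}$, but its degree $p^{e-1}$ is much too large. My idea is to shave off its high-degree tail by subtracting a suitable multiple of a polynomial that vanishes modulo $p^e$ on every integer input. Since $G_e$ is free, essentially all the creativity goes into cooking up the right ``annihilator.''

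The key ingredient I would introduce is
\[
Q(x) \;:=\; \pbra{x(x-1)(x-2)\cdots(x-p+1)}^{e}.
\]
This is a monic integer polynomial of degree exactly $ep$. For every $x\in\Zbb$, the product inside the parentheses equals $p!\binom{x}{p}$ and is therefore divisible by $p$ (since $p$ is prime and $\binom{x}{p}\in\Zbb$); raising to the $e$-th power shows $Q(x)\equiv 0\pmod{p^e}$ on all of $\Zbb$. Because $Q$ is monic, standard long division in $\Zbb[x]$ yields integer-coefficient polynomials $T$ and $L_e$ with
\[
G_e(x)\;=\;T(x)\,Q(x)+L_e(x), \qquad \deg L_e \;<\; ep.
\]
I would take this remainder $L_e$ as the claimed polynomial. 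Its degree is at most $ep-1$, and for any $x=z_0+pz_1$ with $z_0\in[p]$,
\[
L_e(x)\;=\;G_e(x)-T(x)\,Q(x)\;\equiv\;G_e(x)\;\equiv\;z_0\pmod{p^e},
\]
using $T(x)Q(x)\equiv 0\pmod{p^e}$ and Corollary \ref{HS15LDE}.

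The only non-routine step is spotting $Q$: one needs a \emph{monic} integer polynomial of degree exactly $ep$ whose evaluation at every integer is divisible by $p^e$. The choice above is essentially forced by the fact that $p!\binom{x}{p}$ is the minimal-degree monic polynomial with divisibility $p$ on $\Zbb$, and taking the $e$-th power simultaneously boosts the divisibility to $p^e$ and keeps the construction monic. Once $Q$ is on the table, everything else is routine monic polynomial division in $\Zbb[x]$, which is what makes the overall argument visibly ``simpler'' than the explicit construction in \cite{CH18}.
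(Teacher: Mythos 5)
Your proof is correct and follows essentially the same route as the paper: the paper also reduces $G_e$ modulo a monic degree-$ep$ integer polynomial that vanishes modulo $p^e$ on all of $\Zbb$, the only difference being its choice of annihilator $\pbra{x(x^{p-1}-1)}^e=(x^p-x)^e$ (via Fermat's little theorem) instead of your $\pbra{p!\binom{x}{p}}^e$. Both annihilators are monic of the same degree, so the two arguments are interchangeable.
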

\begin{proof}
    Observe that for any $x\in\Zbb$
    $$
    f(x):=\pbra{x\pbra{x^{p-1}-1}}^e\equiv0\pmod{p^e}.
    $$
    This identity can be easily verified by considering whether $p$ divides $x$.
    \footnote{The idea of utilizing $(x^p-x)^e$ was suggested in \cite{mathoverflow} by Will Sawin in the comment, 
    but was not used to give the construction in \cite{CH18}.}

    Note that the coefficient of the term with highest degree in $f(x)$ is $1$.
    Then we can repeatedly subtracting multiple of $f(x)$ from $G_e$ in Lemma \ref{HS15LDE},
    as long as it has degree greater than $\deg(f)=ep$. I.e.,
    $$
    L_e=G_e\pmod{f(x)}.
    $$
\end{proof}

\section{Lowest Degree of Non-trivial Zero Polynomial}

The construction in Lemma \ref{newLDE} relies on the polynomial $\pbra{x\pbra{x^{p-1}-1}}^e$, which vanishes on every integer
after modulo $p^e$; and the coefficient of its highest degree term is $1$.
If we could find a polynomial with the same properties but lower degree, Lemma \ref{newLDE} can be further improved.

However in this section, we show this approach will fail.

\begin{lemma}\label{zeropoly}
    For every prime $p$ and $e\geq1$, assume that polynomial $f(x)$
    \begin{itemize}
        \item vanishes on every integer after modulo $p^e$, i.e., $f(x)\equiv0\pmod{p^e}$ holds for any integer $x$;
        \item is non-trivial, i.e., $\deg(f)\neq0$;
        \item has coefficient $1$ on its highest degree term.
    \end{itemize}
    Then $\deg(f)\geq\ord_p^{-1}(e)$, and this lower bound can be attained.
\end{lemma}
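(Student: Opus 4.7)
The plan is to expand $f$ in the falling-factorial basis $x^{(k)}:=x(x-1)(x-2)\cdots(x-k+1)$ and read off divisibility from the coefficients. Writing $f(x)=\sum_{k=0}^{d} c_k\, x^{(k)}$ with $d=\deg(f)$, the change of basis $x^n=\sum_{k} S(n,k)\, x^{(k)}$ is given by Stirling numbers of the second kind and has integer entries, so $f\in\Zbb[x]$ forces $c_k\in\Zbb$. Since each $x^{(k)}$ is itself monic of degree $k$, the hypothesis that the leading coefficient of $f$ is $1$ translates into $c_d=1$.

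Next I would show that $f(x)\equiv 0\pmod{p^e}$ for every integer $x$ implies $p^e\mid k!\,c_k$ for every $k\in[d+1]$. The cleanest route is via finite differences: the identity $(\Delta^k f)(0)=k!\,c_k$ (an immediate consequence of $x^{(k)}$ being the ``discrete monomial'' satisfying $\Delta x^{(k)}=k\,x^{(k-1)}$) combined with the expansion $(\Delta^k f)(0)=\sum_{j=0}^{k}(-1)^{k-j}\binom{k}{j}f(j)$ shows $p^e\mid k!\,c_k$, because each $f(j)$ is divisible by $p^e$. Alternatively, one can evaluate $f$ at $x=0,1,\ldots,d$ in succession and exploit $x^{(j)}(k)=0$ for $j>k$ to solve a triangular system for the $c_k$ modulo $p^e$.

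Specializing to $k=d$ and using $c_d=1$ gives $p^e\mid d!$, and by definition of $\ord_p^{-1}$ this yields $d\geq\ord_p^{-1}(e)$, the desired lower bound. For tightness, set $D=\ord_p^{-1}(e)$ and take $f(x)=x^{(D)}$: it is monic of degree $D$ with integer coefficients, and for every integer $x$ the quantity $x^{(D)}/D!=\binom{x}{D}$ is an integer, so $D!\mid f(x)$ and hence $p^e\mid f(x)$.

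The only conceptual step worth flagging is the passage from ``integer coefficients in the standard basis'' to ``integer coefficients in the falling-factorial basis'', which is where integrality of Stirling numbers is used; once that is in hand, the rest of the argument is a routine finite-difference computation, and I do not anticipate any real obstacle.
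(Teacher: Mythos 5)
Your proposal is correct and takes essentially the same route as the paper: both expand $f$ in the monic falling-factorial basis, extract $p^e\mid k!\,c_k$ from the values $f(0),\ldots,f(k)$ (your finite-difference identity $(\Delta^k f)(0)=k!\,c_k$ is just the closed form of the paper's triangular induction, which you also note as an alternative), and exhibit $\prod_{i=0}^{D-1}(x-i)=D!\binom{x}{D}$ with $D=\ord_p^{-1}(e)$ for tightness.
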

\begin{proof}
    Assume $f(x)$ is an arbitrary polynomial satisfying the conditions and $\deg(f)=k>0$. Then without loss of generality, 
    we can write it as
    $$
    f(x)=a_0+a_1x+a_2x(x-1)+\cdots+a_k\prod_{i=0}^{k-1}(x-i).
    $$
    Now we prove $a_ip^{\ord_p(i)}\equiv0\pmod{p^e}$ by induction on $i$.
    \begin{itemize}
        \item $i=0$. Observe that $f(0)=a_0\equiv0\pmod{p^e}$, thus the claim holds immediately.
        \item $i=j+1,j\geq0$. Observe that
            $$
            f(j+1)
            =\sum_{u=0}^{j+1}a_u\prod_{v=0}^{u-1}(j+1-v)
            =\sum_{u=0}^{j+1}a_uu!\binom{j+1}{u}
            \equiv a_{j+1}p^{\ord_p(j+1)}\equiv0\pmod{p^e},
            $$
            thus the claim holds as well.
    \end{itemize}
    Since $f(x)$ has coefficient $1$ on its highest degree term, $a_k$ should equal $1$ and thus $\deg(f)=k\geq\ord_p^{-1}(e)$.

    To show this lower bound can be actually achieved, we construct 
    $$
    f(x)=\prod_{i=0}^{d-1}(x-i)=d!\binom xd\equiv0\pmod{p^e},
    $$
    where $d=\ord_p^{-1}(e)$.
\end{proof}

Although this provides a better polynomial, its improvement is marginal as
$$
    \ord_p(d)=\sum_{i=1}^{+\infty}\left\lfloor\frac dp\right\rfloor\leq\sum_{i=1}^{+\infty}\frac dp=\frac{d}{p-1},
$$
thus $\ord_p^{-1}(e)\geq e(p-1)$.

\section{Extracting Other Lower Digits using Polynomial}

Following \cite{mathoverflow}, we consider the problem to extract other lower digits using polynomial, which can also be used to accelerate the bootstrapping process in \cite{HS15}.

However, we show that the answer is negative.

\begin{lemma}\label{rdigitremove}
    For any prime $p$ and $1<r<e$, there does not exist polynomial $f(x)$ such that for every integer $z_0,z_1$ with
    $z_0\in[p^r]$, 
    $$
    f(z_0+z_1p^r)\equiv z_1p^r\pmod{p^e}.
    $$
\end{lemma}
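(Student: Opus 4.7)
The plan is to derive a contradiction by evaluating the $p^r$-th forward finite difference
\[
\Delta^{p^r} f(0) \;=\; \sum_{i=0}^{p^r}(-1)^{p^r-i}\binom{p^r}{i}f(i)
\]
in two incompatible ways. On the one hand, I would use the hypothesis: for each $i \in [p^r]$, setting $z_0 = i$ and $z_1 = 0$ gives $f(i) \equiv 0 \pmod{p^e}$, while for $i = p^r$, setting $z_0 = 0$ and $z_1 = 1$ gives $f(p^r) \equiv p^r \pmod{p^e}$. Every term of the sum with $i < p^r$ therefore contributes $0 \pmod{p^e}$, and only the $i = p^r$ term survives:
\[
\Delta^{p^r} f(0) \;\equiv\; f(p^r) \;\equiv\; p^r \pmod{p^e}.
\]

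On the other hand, $f$ has integer coefficients, so the identity $\Delta^k(x^j)|_{x=0} = k!\,S(j,k)$ (where $S(j,k)$ is the Stirling number of the second kind) shows that $k!$ always divides $\Delta^k f(0)$ in $\Zbb$. Taking $k = p^r$ and using Legendre's formula,
\[
\ord_p\!\big(\Delta^{p^r} f(0)\big) \;\geq\; \ord_p(p^r!) \;=\; 1 + p + p^2 + \cdots + p^{r-1}.
\]
This is a sum of $r$ positive integers whose second term is $p \geq 2$, so the sum is at least $(r-1)\cdot 1 + 2 = r+1$ whenever $r \geq 2$. Hence $p^{r+1}$ divides $\Delta^{p^r} f(0)$ as an integer, and since $e > r$ forces $e \geq r+1$, reducing the previous congruence modulo $p^{r+1}$ yields $p^r \equiv 0 \pmod{p^{r+1}}$, which is false.

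The crux of the argument is the choice of the differencing index $k = p^r$: small enough that the hypothesis pins every $f(i)$ with $0 \leq i < k$ to $0 \pmod{p^e}$, yet large enough that $p^r!$ absorbs strictly more than $r$ factors of $p$. The hypothesis $r > 1$ enters precisely through the bound $\ord_p(p^r!) \geq r+1$, which is sharp at $(p,r) = (2,2)$; this also clarifies why the approach cannot extend to $r = 1$, where Lemma~\ref{newLDE} already constructs the desired polynomial.
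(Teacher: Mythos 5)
Your argument is correct, but it is a genuinely different proof from the one in the paper. The paper writes $f(x)=a_0+a_1x+a_2x^2+\cdots$ and evaluates at just three points: $f(0)$ forces $p^e\mid a_0$; $f(p^{e-1})\equiv p^{e-1}$ forces $a_1\equiv 1\pmod p$; and $f(p)\equiv 0$ (valid because $r>1$ puts $p$ in $[p^r]$) forces $a_1\equiv 0\pmod p$ --- a contradiction concentrated entirely in the linear coefficient. Your proof instead takes the $p^r$-th finite difference at $0$, uses the hypothesis only at the points $0,1,\ldots,p^r$ (so only for $z_1\in\{0,1\}$), and plays the resulting congruence $\Delta^{p^r}f(0)\equiv p^r\pmod{p^e}$ against the divisibility $p^r!\mid \Delta^{p^r}f(0)$, with Legendre's formula giving $\ord_p(p^r!)=1+p+\cdots+p^{r-1}\geq r+1$ exactly when $r\geq 2$. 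Both proofs are complete; each step of yours checks out, including the identity $\Delta^k(x^j)\big|_{x=0}=k!\,S(j,k)$ (which does require integer coefficients, the paper's standing convention) and the sharpness remark at $(p,r)=(2,2)$. The paper's argument is shorter and more elementary, needing no combinatorial identities. Yours buys two things: it establishes a slightly stronger nonexistence result, since it never evaluates $f$ beyond $p^r$ (the paper needs the hypothesis at $p^{e-1}$, i.e., at $z_1=p^{e-1-r}$); and it is methodologically of a piece with Lemma~\ref{zeropoly}, whose falling-factorial expansion is the same finite-difference mechanism, so your proof makes visible that the obstruction in Lemma~\ref{rdigitremove} is again governed by $\ord_p$ of a factorial rather than by an ad hoc coefficient computation.
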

\begin{proof}
    Assume such $f(x)$ exists and
    $$
    f(x)=a_0+a_1x+a_2x^2+\cdots.
    $$
    Observe that
    \begin{align}
        &0\equiv f(0)\equiv a_0\pmod{p^e}\label{eq1}\\
        &p^{e-1}\equiv f(p^{e-1})\equiv a_0+a_1p^{e-1}\pmod{p^e}\label{eq2}\\
        &0\equiv f(p)\equiv a_0+a_1p+a_2p^2+\cdots\pmod{p^e}\label{eq3}.
    \end{align}
    Equation \ref{eq1} shows $a_0|p^e$. Thus in Equation \ref{eq2}, we have $p|(a_1-1)$.
    Note that Equation \ref{eq3} implies
    $$
        p^2\ |\ p^e\ |\ \pbra{a_0+a_1p+a_2p^2+\cdots},
    $$
    which gives $p|a_1$ and contradicts to $p|(a_1-1)$.
\end{proof}

As a corollary, such lower digits extraction polynomial does not exist.

\begin{corollary}\label{rdigitext}
    For any prime $p$ and $1<r<e$, there does not exist polynomial $f(x)$ such that for every integer $z_0,z_1$ with
    $z_0\in[p^r]$, 
    $$
    f(z_0+z_1p^r)\equiv z_0\pmod{p^e}.
    $$
\end{corollary}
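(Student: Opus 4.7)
The plan is to derive Corollary \ref{rdigitext} as a direct consequence of Lemma \ref{rdigitremove} by a simple ``complementation'' argument. Suppose, toward contradiction, that a polynomial $f(x)$ satisfies $f(z_0 + z_1 p^r) \equiv z_0 \pmod{p^e}$ for every integer $z_0, z_1$ with $z_0 \in [p^r]$. The key observation is that the identity polynomial $x$ itself trivially satisfies $x \equiv z_0 + z_1 p^r \pmod{p^e}$ on such inputs, so subtracting $f$ from it should isolate the $z_1 p^r$ part.

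Concretely, I would define $g(x) := x - f(x)$, which is a polynomial with integer coefficients (assuming $f$ has integer coefficients; otherwise one can reduce to this case by the standard argument that polynomial congruences $\bmod\ p^e$ over $\Zbb$ can be rewritten with integer coefficients). For any integer $z_0 \in [p^r]$ and any integer $z_1$, I would then compute
$$
g(z_0 + z_1 p^r) = (z_0 + z_1 p^r) - f(z_0 + z_1 p^r) \equiv (z_0 + z_1 p^r) - z_0 = z_1 p^r \pmod{p^e},
$$
so $g$ meets exactly the hypothesis of Lemma \ref{rdigitremove}. But that lemma asserts no such polynomial exists, a contradiction, so $f$ cannot exist either.

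There is essentially no obstacle here: the entire content of the corollary has already been absorbed into Lemma \ref{rdigitremove}. The only thing worth being careful about is that the condition on $z_0$ (namely $z_0 \in [p^r]$) is the same in both statements, so the reduction is valid on the nose, and that the integer-coefficient assumption is preserved when passing from $f$ to $g = x - f$.
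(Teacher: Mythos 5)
Your proposal is correct and is exactly the paper's argument: the paper also sets $g(x) := x - f(x)$ and observes that $g$ would satisfy the hypothesis ruled out by Lemma \ref{rdigitremove}. Your additional remarks about integer coefficients and the matching range of $z_0$ are fine but not needed beyond what the paper states.
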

\begin{proof}
    Assume such $f(x)$ exists. 
    Then $g(x):=x-f(x)$ contradicts the statement in Lemma \ref{rdigitremove}.
\end{proof}

Note that though $r>1$ seems odd in Corollary \ref{rdigitext}, it is inevitable since when $r=1$ we do have such polynomial
in Lemma \ref{CH18LDE} or Lemma \ref{newLDE}.

\section*{Acknowledgment}
We would like to thank Prof. Andrew C. Yao and Prof. Yu Yu for hosting the crypto study group in Tsinghua University 
and giving many helpful comments on the manuscript.
We would also like to thank other students in the crypto study group for valuable discussions.

\bibliography{ref}
\end{document}